\newtheorem{theorem}{Theorem}[section]
\newtheorem{lemma}[theorem]{Lemma}
\newtheorem{corollary}[theorem]{Corollary}
\theoremstyle{remark}
\newcommand{\ov}{\overline}
\DeclareMathOperator{\diam}{diam}
\newcommand{\mysize}[1]{\lvert #1 \rvert}
\newcommand{\mynatural}{\mathbb{N}}
\newcommand{\myreal}{\mathbb{R}}
\newcommand{\iv}[2]{[\, {#1}, \, {#2} \,]}
\newcommand{\Biv}[2]{\big[\, {#1}, \, {#2} \,\big]}
\newcommand{\W}[3]{W([\, {#1}, \, {#2} \,], \, {#3})}
\newcommand{\BW}[3]{W \big(\big[\, {#1}, \, {#2} \,\big], \, {#3}\big)}
\begin{document}

\begin{titlepage}
\author{Michael~A.~Bender, S\'andor~P.~Fekete, \\ Alexander~Kr\"oller, Vincenzo~Liberatore, \\ Joseph~S.~B.~Mitchell, Valentin~Polishchuk, \\ Jukka~Suomela}

\title{The Minimum Backlog Problem}

\date{}

\maketitle

\begin{abstract}
We study the minimum backlog problem (MBP). This online problem arises, e.g.,\ in the context of sensor networks. We focus on two main variants of MBP.

The \emph{discrete MBP} is a 2-person game played on a graph $G=(V,E)$. The \emph{player} is initially located at a vertex of the graph. In each time step, the \emph{adversary} pours a total of one unit of water into \emph{cups} that are located on the vertices of the graph, arbitrarily distributing the water among the cups. The player then moves from her current vertex to an adjacent vertex and empties the cup at that vertex. The player's objective is to minimize the \emph{backlog}, i.e., the maximum amount of water in any cup at any time.

The \emph{geometric MBP} is a continuous-time version of the MBP: the cups are points in the two-dimensional plane, the adversary pours water continuously at a constant rate, and the player moves in the plane with unit speed. Again, the player's objective is to minimize the backlog.

We show that the \emph{competitive ratio} of any algorithm for the MBP has a lower bound of $\Omega(D)$, where $D$ is the diameter of the graph (for the discrete MBP) or the diameter of the point set (for the geometric MBP). Therefore we focus on determining a strategy for the player that guarantees a uniform upper bound on the absolute value of the backlog.

For the absolute value of the backlog there is a trivial lower bound of $\Omega(D)$, and the deamortization analysis of Dietz and Sleator gives an upper bound of $O(D\log N)$ for $N$ cups. Our main result is a tight upper bound for the geometric MBP: we show that there is a strategy for the player that guarantees a backlog of $O(D)$, independently of the number of cups.

We also study a \emph{localized} version of the discrete MBP: the adversary has a location within the graph and must act locally (filling cups) with respect to his position, just as the player acts locally (emptying cups) with respect to her position. We prove that deciding the value of this game is PSPACE-hard.
\end{abstract}
\thispagestyle{empty}
\setcounter{page}{0}
\end{titlepage}

\section{Introduction}

We study the \textbf{minimum backlog problem} (\textbf{MBP}). This is an online problem in which an agent moves around a domain and services a set of locations, ``emptying'' a buffer at each location in an effort to make sure that no buffer gets too full. The MBP is related to the $k$-server problem \cite{chrobak1991an,fiat1990competitive,fb-olksp-97,210128,manasse1990competitive,sleator1985amortized} (with $k=1$), in which requests are popping up at points in a metric space, and the $k$ servers need to minimize the distance traveled to satisfy the requests. However, in the MBP, the objective is not to minimize the traveled distance, but to minimize the \emph{backlog}, i.e., the maximum amount of data residing in any buffer at any point in time.

\subsection{Motivation}

A practical motivation for the MBP arises in the context of a sensor network that, e.g., performs motion-tracking for a set of objects that move within the sensor field. Each sensor acquires data about nearby objects. The total rate of data accumulation within the network remains approximately constant, assuming a relatively fixed set of objects being monitored; however, the distribution of the data rate over the field is nonuniform and unpredictable.

If the system is used for a field study where the data is not analyzed until the end of the experiment, it may be much more energy-efficient to store the bulk of the data locally on a memory card and let someone or something gather the data by physically visiting the sensor device (or a neighborhood of the device) \cite{diao2007rethinking,gu2006dhm,jea2005mcm,mdgps-ulpdssn-06,somasundara2004mes}. During the experiment, each sensor only needs to report the amount of data in its local buffer. The objective of the data gatherer is to visit the sensors in an effective order, so that no sensor's storage device is overfull.

An analogous problem arises in scheduling battery recharging/replacement in a field of wireless devices whose power
consumption varies unpredictably with time and location.

\subsection{Discrete MBP}

We will now formalize three versions of the MBP that we study in this paper. We start with the \textbf{discrete MBP}. In this problem, we have an unweighted directed graph $G=(V,E)$ with an (initially empty) \textbf{cup} on each vertex. There is a \textbf{player}, who moves from vertex to vertex along the edges of the graph emptying cups, and there is an \textbf{adversary} (not located anywhere in particular), who refills the cups with water. We talk about filling cups with water because of historical precedent~\cite{DietzSl87}.

The following game is played for an indefinite (but finite) number of rounds. In each round, the two opponents do the following:
\begin{itemize}
    \item The adversary pours a total of one unit of water into the cups. The adversary is free to distribute the unit of water in any way he likes. The adversary may base his decision on the current location of the player and the current water levels in all of the cups.
    \item The player moves along an edge and empties the cup in its new location. The player can see the amount of water that the adversary has poured into each cup, and the player can use this information to make the decisions.
\end{itemize}
The problem is online, i.e., the player does not know how the water is distributed in the future. Thus, the player's decision of which edge to traverse may be based only on the amount of the water that has been poured into all cups so far, but not on the future distribution of water. The objective of the player is to minimize the \textbf{backlog}, which is defined to be the maximum amount of water in any cup at any time.

\subsection{Geometric MBP}

It is straightforward to generalize the MBP to weighted, continuous scenarios. In this paper we will mainly focus on the following version which we call the \textbf{geometric MBP}; this version is of particular interest for sensor-network applications. Let $P \subseteq \myreal^2$ be a finite planar set. There is a cup on each point of~$P$. The two-player game proceeds as follows in a continuous manner:
\begin{itemize}
    \item The adversary pours water into the cups $P$; the total rate at which the water is poured into the cups is~1.
    \item The player moves in the plane with unit speed, starting at an arbitrary point. Whenever the player visits a cup, the cup is emptied.
\end{itemize}
Again, this is an online problem, and the goal of the player is to minimize the backlog.

\subsection{Localized MBP}\label{sec:localizedMBP}

In the discrete and geometric versions of the MBP, the actions of the player are restricted by her location. To keep the game fair, we may also consider the following variant in which the actions of the adversary are also restricted by his location; we call it the \textbf{localized MBP}. The game is a fairly straightforward adaptation of the discrete MBP; however, some of the details need more care, as we will be interested in deciding the exact value of this game.

The localized MBP is a two-player game played on a directed graph $G = (V,E)$. Each vertex is a cup that carries some integer load (water level). An instance of the localized MBP consists of the graph, the initial loads of the cups, and the distinct starting positions of the two players.

At each time step, both the player and the adversary are located at some vertices. The player starts the game, and both participants take turns in moving from their respective current position along an outgoing edge to an adjacent node (they are not allowed to stay in place):
\begin{itemize}
    \item A move by the player ends with her removing the load from the vertex she reached.
    \item A move by the adversary ends with him increasing the load on the vertex he reached by one unit.
\end{itemize}
The game ends when the player steps onto the vertex currently occupied by the adversary, or when the adversary manages to get the load on some vertex to a pre-specified \textbf{target value}. The player wins if she can keep the adversary from reaching the target value; the adversary wins if he can reach the target value.

\subsection{Results}

Throughout this work, we write $N$ for the number of cups, and $D$ for the maximum distance between the cups (i.e., the diameter of the graph $G$ or the set $P$). We start with the following simple observations (Section \ref{sec:prelim}):
\begin{itemize}
    \item \emph{Discrete and geometric MBP, competitive analysis}: The competitive ratio of any algorithm for the problem may be as bad as $\Omega(D)$. Therefore we will focus on \emph{absolute} bounds on the backlog in this paper.
    \item \emph{Discrete and geometric MBP, lower bounds}: The adversary can guarantee a backlog of $\Omega(D)$. For the discrete version, the adversary can also guarantee a backlog of $\Omega(\log N)$.
    \item \emph{Discrete and geometric MBP, upper bounds}: The player can guarantee a backlog of $O(D\log N)$.
\end{itemize}
Our main results are as follows:
\begin{itemize}
    \item \emph{Geometric MBP, upper bounds} (Sections \ref{sec:geom-strategy} and \ref{sec:geom-analysis}): We show that the player can achieve a backlog of $O(D)$, independently of the number of cups. This is optimal up to constants.
    \item \emph{Localized MBP, hardness} (Section~\ref{sec:local-hard}): We show that deciding the value of the game is PSPACE-hard, even for the smallest nontrivial target value of~$2$.
\end{itemize}

\section{Background and Related Work}

This style of problem, with a player emptying one cup at a time and an adversary distributing water among cups, is a classic problem, which has been independently discovered and rediscovered many times. However, in previous formulations, there is no issue of locality---that is, the player can empty any single cup (or, depending on the formulation, any feasible subset of cups) in any time step, independently of her previous actions.

The earliest reference to cup emptying of which we are aware is the work of Dietz and Sleator~\cite{DietzSl87}, who used it as a technique for deamortizing data structures. They proved that if there are $N$ cups, and the player always empties the fullest one, then no cup ever contains more than $\ln N$ water; they also showed that this bound is optimal. The bound leads to an optimal, worst-case-constant algorithm for the order-maintenance problem. This deamortization technique has been used many times since.

Adler et al.~\cite{AdlerBeFrGoGoPa03} used cup emptying as a technique for analyzing scheduling algorithms. In particular, they showed how to use cup emptying to produce ``fair'' job schedules. Chrobak et al.~\cite{ChrobakCsImNoSsWo01} introduced a generalization of cup emptying and applied it to multiprocessor scheduling with conflicts between tasks; there is subsequent work by, e.g., Bar-Noy et al.~\cite{bfln-cosp-02}, Koga~\cite{k-bstlfpqdf-01}, and Rote~\cite{Rote03}. In the problem, the player can empty more than one cup at a time, but there is a conflict graph between cups. If two cups conflict, only one of them can be emptied at a time. Here there is a graph, but there is still no issue of locality.

\pagebreak

More recent work by Bodlaender et al.~\cite{bhw-cgha-11} considers a generalization of the game on an unweighted, undirected graph $G=(V,E)$, in which one move by the adversary consists of arbitrarily distributing one unit of water among a subset $S\subseteq V$, while a move of the player is to empty all cups of an independent set. They show that the value of the game (the largest possible filling level of a cup that the adversary can force, and to which the player can limit the sequence) lies between the natural logarithm of the clique number and the natural logarithm of the chromatic number, settling the value of the game for all perfect graphs. They also consider the game on the simplest non-perfect graphs, i.e., odd holes and odd anti-holes, and show that a natural greedy strategy is generally not optimal. Bodlaender et al.~\cite{bhk+-cvws-12} consider another variant on a ring graph with $N$ nodes, where the player may empty an arbitrary group of $c$ consecutive cups in each round, and compute the exact values for all $N\leq 12$.

To the best of our knowledge, in all previous work on cup emptying, there is no concept of locality of the player with respect to the cups. In contrast, the minimum backlog problem studied in this paper is specifically cup emptying with a player who moves around in graphs or geometric domains.

It is important to note that although the motivation for studying cup emptying in metric spaces came from online vehicle routing, the locality shows up in non-geometric contexts as well. For instance, in a job scheduling problem, there may be a (set-up) cost associated with switching from executing one job to executing another. This, in particular, makes the Traveling Salesman Problem, which originated as a geometric problem, applicable also to scheduling tasks \cite{review,flowshop,separation}. Hence, although we state our problem and results in purely geometric terms, they are also relevant in some more general scheduling applications.

\section{Preliminary Observations}\label{sec:prelim}

We start with preliminary observations on the discrete and geometric versions of the MBP.

\subsection{Competitive Analysis is Doomed}

It would be natural to try to give a competitive algorithm for the MBP. Unfortunately, an online algorithm with a good competitive ratio is not possible, unless we restrict ourselves to the case of $N = O(1)$ or $D = O(1)$. Indeed, consider either of the following scenarios, both of which have $N$ cups and a diameter of $D = N-1$:
\begin{enumerate}
    \item \emph{discrete MBP}: graph $G$ is a path on $N$ vertices,
    \item \emph{geometric MBP}: set $P$ consists of the points $(1,0),\ (2,0),\ \dotsc,\ (N,0)$.
\end{enumerate}
Number the cups by $1,2,\dotsc,N$ so that cups $1$ and $N$ are the endpoints of the diameter. Suppose that the adversary picks a random permutation of the cups $2$ through $N-1$, and for the first $N-2$ time steps pours a unit of water per step into the cups according to the permutation. Then, the adversary picks one of the endpoints of the path and pours the water there forever. The best offline strategy would be to rush to the endpoint and stay there---this yields a maximum backlog of~1. On the other hand, without prior knowledge of the ``drenched'' endpoint, any algorithm will put the player far from the endpoint (since the adversary may choose the endpoint that is farthest from the current player position), thus, making the performance of the algorithm $\Omega(N) = \Omega(D)$.

Given that the competitive ratio of an online algorithm for the problem may be very high, we concentrate on providing uniform upper bounds on the performance, i.e., on giving universal bounds on the amount of water in any cup at any time.

\subsection{Simple Lower Bounds}\label{ssec:simple-lower}

The performance of any algorithm has a lower bound of $\Omega(D)$: the adversary can simply pour water at the rate of $1/2$ into the cups at the ends of the diameter. The same holds for both the discrete MBP and the geometric MBP.

To get another lower bound, the adversary may pick a set of $K$ cups such that the distance between any two cups in the set is at least $d$, for some number~$d$. The adversary will then pour the water evenly into never-emptied cups from the set. After $dK$ steps, one of the cups will have
\[
    \Omega\bigl(\tfrac{d}{K}+\tfrac{d}{K-1}+\dots+d\bigr)=\Omega(d\ln K)
\]
units of water.

In particular, for the discrete MBP with $N$ cups, we can always choose $d = 1$ and $K = N$, and hence the adversary can ensure a backlog of $\Omega(\max(D,\log N))$. There are also some families of graphs in which the adversary can guarantee a backlog of $\Omega(D \log N)$: consider, for example, subdivisions of star graphs.

\subsection{Simple Upper Bounds}\label{ssec:simple-upper}

As for the upper bounds, which is the main focus of the paper, a na\"ive algorithm has a performance of $O(C)$, where $C$ is the length of the shortest closed path visiting all cups.

To find a better upper bound, let us first consider the \emph{discrete MBP on a complete graph} (or, put otherwise, a game in which the player can empty any single cup in each round). Intuitively, always emptying the fullest cup is optimal, and a simple exchange argument validates this intuition. Dietz and Sleator~\cite{DietzSl87} analyze the performance of this strategy.
\begin{lemma}[{Dietz and Sleator~\cite[Theorem~5]{DietzSl87}}]\label{lem:cup}
    In the discrete MBP on a complete graph with $N$ vertices, if the player always empties the fullest cup, then no cup ever contains more than $\ln{N}$ units of water.
\end{lemma}

We can apply the same strategy in the discrete and geometric MBP: the player repeatedly walks to the fullest cup (which takes at most $D$ time units) and empties it. A simple application of Lemma~\ref{lem:cup} shows that the backlog of this strategy will be bounded by $O(D \log N)$, where $N$ is the number of cups.

This is the upper bound that we set out to beat in this paper. Recall from Section~\ref{ssec:simple-lower} that we cannot necessarily do any better in the discrete MBP. However, we show that in the geometric MBP the player always has a strategy that guarantees a backlog of $O(D)$, independently of the number of cups.

\section{Algorithmic Techniques}

We will now start to develop the algorithmic techniques that we will use to design our strategy for the geometric MBP. There are two main ingredients:
\begin{enumerate}
    \item the $(\tau,k)$-game (Section~\ref{ssec:taukgame}),
    \item Few's lemma (Section~\ref{ssec:few}).
\end{enumerate}
The $(\tau,k)$-game is a purely combinatorial two-player game; we do not make any references to the geometric setting that we have in the geometric MBP. On the other hand, Few's lemma is a purely geometric result; there is no game-theoretic element in the statement of the result. Section~\ref{sec:geom-strategy} shows how to put the two ingredients together in order to design a strategy for the geometric MBP.

\subsection{\boldmath The \texorpdfstring{$(\tau,k)$}{(tau,k)}-game}\label{ssec:taukgame}

The $(\tau,k)$-game is a straightforward generalisation of the empty-the-fullest strategy on a complete graph (recall Lemma~\ref{lem:cup}). For each $\tau\in\myreal$, $k\in\mynatural$ we define the \emph{$(\tau, k)$-game} as follows. There is a set of cups, initially empty, not located in any particular metric space. At each time step the following takes place, in this order:
\begin{enumerate}
    \item The adversary pours a total of $\tau$ units of water into the cups. The adversary is free to distribute the water in any way he likes.
    \item The player empties $k$ \emph{fullest} cups.
\end{enumerate}

The game is discrete---the player and the adversary take turns making moves during discrete time steps. The following lemma bounds the amount of water in any cup after $r$ steps of the game; this is a direct extension of a result of Dietz and Sleator~\cite{DietzSl87}.
\begin{lemma}\label{lem:tauk-k-game}
    The water level in any cup after $r$ complete time steps of the $(\tau, k)$-game is at most $H_r \tau / k$, where $H_r$ is the $r$th harmonic number.
\end{lemma}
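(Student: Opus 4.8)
The plan is to follow the sorted‑profile / prefix‑sum argument underlying the Dietz--Sleator bound. For each step $t$, let $S_t(m)$ denote the total amount of water contained in the $m$ fullest cups at the end of step $t$; after padding the cup set with infinitely many permanently empty cups (so that emptying them is a no‑op, which also handles $\mysize{P}<k$), this is well defined for every $m\ge 1$, and the quantity the lemma asks us to bound is exactly $S_r(1)$. I would prove the stronger statement that $S_t(m)\le m\tau\sum_{i=1}^{t}\frac{1}{m+ik}$ for all $m\ge 1$ and all $t\ge 0$, and then specialise to $m=1$, $t=r$: since $\frac{1}{1+ik}\le\frac{1}{ik}$, the sum is at most $\frac1k\sum_{i=1}^{r}\frac1i=\frac1k H_r$, which gives the claimed bound (and in fact a slightly sharper one).

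The core of the argument is the one‑step inequality $S_{t+1}(m)\le\frac{m}{m+k}\bigl(S_t(m+k)+\tau\bigr)$, which I would establish in three short moves. Write $S'(\cdot)$ for the prefix sums immediately after the adversary's pour at step $t+1$. (i) $S'(j)\le S_t(j)+\tau$ for every $j$, because the $j$ cups that are fullest after the pour held at most $S_t(j)$ water beforehand and received at most $\tau$ more \emph{in total}. (ii) After the player empties the $k$ fullest cups, the $m$ fullest remaining cups have the same water levels as the cups ranked $k+1,\dots,k+m$ after the pour, so $S_{t+1}(m)=S'(m+k)-S'(k)$. (iii) The $k$ fullest among the top $m+k$ cups carry at least their proportional share of that water, i.e.\ $S'(k)\ge\frac{k}{m+k}S'(m+k)$. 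Chaining (ii), (iii), and (i): $S_{t+1}(m)=S'(m+k)-S'(k)\le\frac{m}{m+k}S'(m+k)\le\frac{m}{m+k}\bigl(S_t(m+k)+\tau\bigr)$.

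With this recursion in hand, the bound on $S_t(m)$ follows by a one‑line induction on $t$, the base case being $S_0\equiv 0$. Equivalently, unrolling the recursion down to step $0$ works cleanly because the leading coefficients telescope, $\frac{m}{m+k}\cdot\frac{m+k}{m+2k}\cdots\frac{m+(i-1)k}{m+ik}=\frac{m}{m+ik}$, leaving exactly $S_t(m)\le m\tau\sum_{i=1}^{t}\frac{1}{m+ik}$. Specialising $m=1$, $t=r$ and applying the harmonic comparison above finishes the proof.

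The step I expect to need the most care is the recursion, in particular move (iii) together with the bookkeeping in move (ii): one must check that after deleting the $k$ fullest cups the re‑sorted profile of levels is precisely $S'(k+1)-S'(k),\,S'(k+2)-S'(k+1),\dots$ independently of how ties for the $k$-th fullest cup are broken, and that the adversary's pour raises every prefix sum $S(j)$ simultaneously by at most $\tau$ (not $\tau$ per cup). The sorted‑average inequality in (iii) is elementary -- the top $k$ of $m+k$ sorted values have average at least the overall average -- but it is the place where the $k$ in the denominator, and hence the $1/k$ factor in the lemma, is actually generated; everything after the recursion is routine.
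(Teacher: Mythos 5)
Your proposal is correct and is essentially the paper's own argument: both are the Dietz--Sleator prefix-sum analysis, where the key one-step inequality $S_{t+1}(m)\le\frac{m}{m+k}\bigl(S_t(m+k)+\tau\bigr)$ is exactly the paper's recursion $S_j \le \bigl(1-\frac{k}{(r-j+1)k+1}\bigr)(\tau+S_{j-1})$ specialised to $m=(r-j)k+1$, and both unroll to the same bound $\tau\sum_{i=1}^{r}\frac{1}{ik+1}\le H_r\tau/k$. The only difference is cosmetic: you run a forward induction over a general prefix length $m$, while the paper fixes the backward-indexed prefix sizes from the start.
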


\begin{proof}
    We follow the analysis of Dietz and Sleator~\cite[Theorem~5]{DietzSl87}. Consider the water levels in the cups after time step $j$. Let \smash{$X_j^{}$\raisebox{4.6pt}{$\scriptstyle \mspace{-7mu}(i)$}} be the amount of water in the cup that is $i$th fullest, and let
    \[
        S_j \,= \!\!\!\!\sum_{i=1}^{(r-j)k+1}\!\!\!\! X_j^{(i)}
    \]
    be the total amount of water in $(r-j)k+1$ fullest cups at that time. Initially, $j = 0$, $\smash{X_0^{(i)} = 0}$, and $S_0 = 0$.

    Let us consider what happens during time step $j \in \{1,2,\dotsc,r\}$. The adversary pours $\tau$ units of water; the total amount of water in $(r-(j-1))k+1$ fullest cups is therefore at most $S_{j-1} + \tau$ after the adversary's move and before the player's move (the worst case being that the adversary pours all water into $(r-(j-1))k+1$ fullest cups).

    Then the player empties $k$ cups. These $k$ fullest cups contained at least a fraction $k/((r-(j-1))k+1)$ of all water in the $(r-(j-1))k+1$ fullest cups; the remaining $(r-j)k+1$ cups are now the fullest. We obtain the inequality
    \[
        S_j \le \left( 1 - \frac{k}{(r-(j-1))k+1} \right) (\tau + S_{j-1})
    \]
    or
    \[
        \frac{S_j}{(r-j)k+1} \le \frac{\tau}{(r-(j-1))k+1} + \frac{S_{j-1}}{(r-(j-1))k+1} \, .
    \]
    Therefore the fullest cup after time step $r$ has the water level at most
    \[
        \begin{split}
        X_r^{(1)} &\,=\, \frac{S_r}{k(r-r)+1} \\
            &\,\le\, \frac{\tau}{1k+1} + \frac{\tau}{2k+1} + \dotso + \frac{\tau}{rk+1} + \frac{S_{0}}{rk+1} \\
            &\,\le\, \frac{\tau}{k} \left( \frac{1}{1} + \frac{1}{2} + \dotso + \frac{1}{r} \right) .
            \qedhere
        \end{split}
    \]
\end{proof}

\subsection{Few's Lemma}\label{ssec:few}

Let us next introduce the geometric ingredient that we will need. The following result is by Few~\cite{few55shortest}:
\begin{lemma}[{Few~\cite[Theorem~1]{few55shortest}}]\label{lem:few}
    Given $n$ points in a unit square, there is a path through the $n$ points of length not exceeding $\sqrt{2n}+1.75$.
\end{lemma}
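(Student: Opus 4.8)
The plan is to use the classical strip decomposition. After rescaling, let the unit square be $[0,1]^2$ and fix a positive integer $m$, to be pinned down at the end. Cut the square into $m$ horizontal strips $S_1,\dots,S_m$ of height $h:=1/m$, where $S_i=[0,1]\times\iv{(i-1)/m}{i/m}$, and assign each of the $n$ points to a strip containing it; let $n_i$ be the number of points placed in $S_i$, so $\sum_{i=1}^m n_i=n$.

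Next I build a single path through all $n$ points. Inside each strip $S_i$ I list its points in order of non-decreasing $x$-coordinate and join consecutive ones by segments, obtaining an arc $\pi_i$; then I concatenate the arcs in a boustrophedon pattern --- traverse $\pi_1$ left to right, go up to the nearer endpoint of $\pi_2$, traverse $\pi_2$ right to left, go up to $\pi_3$, and so on. Each of the $m-1$ transition segments joins two points lying in adjacent strips, hence has vertical extent at most $2h$; orienting the arcs consistently keeps these transitions essentially vertical, so they contribute only an $O(1)$ term to the length.

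It remains to bound the total length of the arcs. Every edge of $\pi_i$ joins two points of $S_i$, so its horizontal displacement is nonnegative and its vertical displacement has absolute value at most $h$; along $\pi_i$ the horizontal displacements add up to the width $W_i\le 1$ spanned by the points of $S_i$, and there are $n_i-1$ edges. With only the triangle-inequality bound $\sqrt{a^2+b^2}\le a+b$ this gives $\mathrm{length}(\pi_i)\le W_i+(n_i-1)h$, and summing over $i$ together with the transitions yields a path of length at most $m+n/m+O(1)$; taking $m=\lceil\sqrt n\,\rceil$ already proves the $O(\sqrt n)$ bound, in fact $2\sqrt n+O(1)$.

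Getting the sharp leading constant is where the work lies, and the scheme above has to be refined. The estimate $\sqrt{a^2+b^2}\le a+b$ is wasteful exactly on the strips whose arc is nearly diagonal: if the $n_i$ points of $S_i$ are spread roughly evenly across the width and alternate in $y$ between the top and bottom of the strip, each of the $\approx n_i$ edges has horizontal length $\approx 1/n_i$ and vertical length $\approx h$, so the arc has length $\approx\sqrt{1+(n_i h)^2}$ rather than $1+n_i h$; summed over $\approx\sqrt n$ strips with $n_i h\approx 1$ this is $\approx\sqrt{2n}$, not $2\sqrt n$. On the other strips one uses the complementary fact that a bundle of points confined to a narrow vertical band can be visited in order of $y$ at a cost bounded by the band's height, so such bundles never force a full unit of vertical wiggle per point. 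Making both effects quantitative at once, handling the vertical bundles, and absorbing the first and last edges of the arcs and the transition segments, is what is needed to bring the bound down to $\sqrt{2n}+1.75$ for $m$ equal to the integer nearest $\sqrt{2n}$. This refined per-strip accounting and the resulting determination of the exact additive constant is the step requiring real care, and hence the main obstacle.
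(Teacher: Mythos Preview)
The paper itself does not prove this lemma; it is cited from Few (1955) and used as a black box, so there is no in-paper argument to compare your attempt against.

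Your strip-and-boustrophedon scheme is the right framework and does give $2\sqrt{n}+O(1)$. (One slip: the transitions are ``essentially vertical'' only if each sweep is extended across the full width of the square; otherwise the horizontal jump between the last point of one strip and the first of the next can be $\Theta(1)$. Extending the sweeps costs nothing, since you already charge $1$ per strip.) But the constant $\sqrt{2n}$ is not established, and the obstacle is larger than you indicate. Your heuristic evaluates the evenly-spread configuration, yet by convexity of $a\mapsto\sqrt{a^2+h^2}$ the \emph{worst} case for the $x$-ordered path is the clustered one (all $n_i$ points at a single $x$-value with heights alternating), which gives length $\approx 1+n_i h$ in strip $i$ and hence $m+n/m\approx 2\sqrt n$ overall. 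So no sharper accounting of \emph{this} path can beat $2\sqrt n$; your vertical-bundle modification changes the path and might help, but you have not carried it through, nor shown how to interpolate between the spread and clustered regimes.

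Few's device is different and cleaner. He uses the perpendicular-detour path: $m{+}1$ equally spaced horizontal lines, each point visited by a vertical spur of length $2d_k\le 1/m$ from its nearest line, giving total length $\le m+2+2\sum_k d_k$. The key step is to consider a second grid shifted by half the spacing; for every point the distances to the two grids sum to exactly $1/(2m)$, so $\sum_k d_k^{A}+\sum_k d_k^{B}=n/(2m)$ and one of the two candidate paths has $\sum d_k\le n/(4m)$. That path has length at most $m+n/(2m)+O(1)$, which at $m\approx\sqrt{n/2}$ yields $\sqrt{2n}+O(1)$; careful handling of the integrality of $m$ and the end segments produces the additive $1.75$. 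This shift-and-choose step is the missing idea in your proposal.
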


We make use of the following corollary:
\begin{corollary}\label{cor:few}
    Let $S$ be a $D \times D$ square. Let $i \in \{0,1,\dotsc\}$. Let $Q \subseteq S$ be a planar point set with $\mysize{Q}=25^i$ and $\diam(Q)=D$. For any point $p \in S$ there exists a closed tour of length at most $5^{i+1}D$ that starts at $p$, visits all points in $Q$, and returns to~$p$.
\end{corollary}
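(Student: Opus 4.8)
The plan is to reduce everything to Few's Lemma by a rescaling, and then to patch the basepoint $p$ into the resulting path. First I would rescale the $D \times D$ square $S$ to a unit square; this divides every length by $D$, so Lemma~\ref{lem:few} applied to (the image of) $Q$, which has $n = 25^i$ points, produces a Hamiltonian path through $Q$ of length at most $D\bigl(\sqrt{2\cdot 25^i} + 1.75\bigr) = D\bigl(\sqrt{2}\cdot 5^i + 1.75\bigr)$ in the original square $S$. Denote the two endpoints of this path by $a$ and $b$; both lie in $Q \subset S$.

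Next I would convert this path into a closed tour based at $p$ by prepending the segment from $p$ to $a$ and appending the segment from $b$ back to $p$. Since $p$, $a$, $b$ all lie in $S$ and the diameter of a $D\times D$ square is $\sqrt 2\, D$, each of these two added segments has length at most $\sqrt 2\, D$. The resulting closed tour therefore starts at $p$, visits every point of $Q$, returns to $p$, and has total length at most $D\bigl(\sqrt 2\cdot 5^i + 1.75 + 2\sqrt 2\bigr)$.

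It then remains to verify the inequality $\sqrt 2\cdot 5^i + 1.75 + 2\sqrt 2 \le 5^{i+1}$, i.e.\ $1.75 + 2\sqrt 2 \le (5 - \sqrt 2)\,5^i$. Since $1.75 + 2\sqrt 2 < 5$ while $(5-\sqrt 2)\cdot 5 > 17$, this holds for every $i \ge 1$. The case $i = 0$ is degenerate: then $\mysize{Q} = 1$, say $Q = \{q\}$, and the trivial tour $p \to q \to p$ has length $2\,\mysize{pq} \le 2\sqrt 2\, D \le 5D = 5^{i+1}D$, so the bound holds there too.

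I expect the only slightly delicate point to be this final constant-chasing — in particular, observing that the additive slack $1.75$ coming from Few's bound together with the two detour segments of length at most $\sqrt 2\, D$ each is comfortably absorbed by the extra factor of $5$ gained in passing from $\sqrt 2\cdot 5^i$ to $5^{i+1}$ once $i \ge 1$, with the small case $i = 0$ dispatched by hand. Beyond this bookkeeping there is no real obstacle; the corollary is essentially just a scaled, closed-up restatement of Lemma~\ref{lem:few}.
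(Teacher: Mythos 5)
Your proof is correct and follows essentially the same route as the paper: apply Few's Lemma (Lemma~\ref{lem:few}) after rescaling, close the path up through $p$ using the fact that the square has diameter $\sqrt{2}\,D$, check the constant for $i \ge 1$, and handle $i=0$ by hand. The only cosmetic difference is that the paper applies Few's Lemma to the $25^i+1$ points $Q \cup \{p\}$ and adds a single closing edge of length at most $\sqrt{2}\,D$, whereas you apply it to $Q$ alone and add two edges from and to $p$; both variants comfortably fit under the bound $5^{i+1}D$.
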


\begin{proof}
    If $i > 0$, by Lemma~\ref{lem:few}, there is tour of length at most
    \[
        \big(\sqrt{ 2 (25^i + 1) }  + 1.75 +\sqrt2 \big) D \,\le\, 5^{i+1}D
    \]
    that starts at $p$, visits all points in $Q$, and returns to $p$. If $i=0$, there is a tour of length $2 \sqrt{2} D \le 5 D$ through $p$ and $\mysize{Q}=1$ points.
\end{proof}

\section{Geometric MBP: Player's Strategy}\label{sec:geom-strategy}

Now we are ready to present an asymptotically optimal algorithm for the geometric MBP. The player's strategy is composed of a number of coroutines, which we label with $i \in \{ 0, 1, \dots \}$. The coroutine $i$ is invoked at times ${(10 L + \ell)} \tau_i$ for each $L \in \{0,1, \dotsc \}$ and $\ell \in \{1,2,\dotsc,10\}$. Whenever a lower-numbered coroutine is invoked, higher-numbered coroutines are suspended until the lower-numbered coroutine returns.

\subsection{Parameters}\label{sec:paremeters}

We choose the values $\tau_i$ as follows. For $i \in \{0, 1, 2, \dotsc \}$, let
\begin{align*}
    k_i &= 25^i, \\
    \tau_i &= (2/5)^i \cdot 10 D k_i = 10^i \cdot 10 D.
\end{align*}

For $L \in \{0,1, \dotsc \}$ and $\ell \in \{1,2,\dotsc,10\}$, define \emph{$(i,L,\ell)$-water} to be the water that was poured during the time interval $\iv{10 L \tau_i}{(10 L + \ell) \tau_i}$.

\subsection{\boldmath Coroutine \texorpdfstring{$i$}{i}}

The coroutine $i$ performs the following tasks when invoked at time ${(10 L + \ell)} \tau_i$:
\begin{enumerate}
    \item \label{step:choose-cups} Determine which $k_i$ cups to empty. The coroutine chooses to empty $k_i$ cups with the largest amount of $(i,L,\ell)$-water.
    \item \label{step:choose-cycle} Choose a cycle of length at most $\tau_i / 2^{i+1}$ which visits the $k_i$ cups and returns back to the original position. This is possible due to Corollary~\ref{cor:few} because ${5^{i+1} D = \tau_i / 2^{i+1}}$.
    \item \label{step:control-player} Guide the player through the chosen cycle.
    \item Return.
\end{enumerate}
Observe that when a coroutine returns, the player is back in the location from which the cycle started. Therefore invocations of lower-numbered coroutines do not interfere with any higher-number coroutines which are currently suspended; they just delay the completion of the higher-numbered coroutines.

The completion is not delayed for too long. Indeed, consider a time period $\iv{j \tau_i}{{(j+1)} \tau_i}$ between consecutive invocations of the coroutine~$i$. For $h \in \{0 , 1, \dotsc, i \}$, the coroutine $h$ is invoked $10^{i-h}$ times during the period (recall the definition of $\tau_i$ in Section~\ref{sec:paremeters}). The cycles of the coroutine $h$ have total length at most
\[
    10^{i-h} \tau_{h} / 2^{h+1} \,=\, \tau_i / 2^{h+1} .
\]
In grand total, all coroutines $0,1, \dotsc, i$ invoked during the time period take time
\[
     \sum_{h=0}^i \tau_i / 2^{h+1} \,<\, \tau_i .
\]
Therefore all coroutines invoked during the time period are able to complete within it. This proves that the execution of the coroutines can be scheduled as described.

\section{Geometric MBP: Analysis}\label{sec:geom-analysis}

We now analyse the backlog under the strategy of Section~\ref{sec:geom-strategy}. For any points in time $0 \le t_1 \le t_2 \le t_3$, we write $\W{t_1}{t_2}{t_3}$ for the maximum per-cup amount of water that was poured during the time interval $\iv{t_1}{t_2}$ and is still in the cups at time $t_3$.

We need to show that $\W{0}{t}{t}$ is bounded by a constant that does not depend on $t$. To this end, we first bound the amount of $(i,L,\ell)$-water present in the cups at the time ${(10 L + \ell +1)} \tau_i$. Then we bound the maximum per-cup amount of water at an arbitrary moment of time by decomposing the water into $(i,L,\ell)$-waters and a small remainder.

We make use of the following simple properties of $\W{\cdot}{\cdot}{\cdot}$. Consider any four points in time $0 \le t_1 \le t_2 \le t_3 \le t_4$. First, the backlog for \emph{old} water is nonincreasing: $\W{t_1}{t_2}{t_4} \le \W{t_1}{t_2}{t_3}$. Second, we can decompose the backlog into smaller parts: $\W{t_1}{t_3}{t_4} \le \W{t_1}{t_2}{t_4} + \W{t_2}{t_3}{t_4}$.

\subsection{\boldmath \texorpdfstring{$(i,L,\ell)$}{(i,L,l)}-Water at Time \texorpdfstring{${(10 L + \ell + 1)} \tau_i$}{(10 L + l + 1) tau\_i}}

Let $i \in \{0,1,\dotsc\}$, $L \in \{0,1,\dotsc\}$, and $\ell \in \{1,2,\dotsc,10\}$. Consider $(i,L,\ell)$-water and the activities of the coroutine $i$ when it was invoked at the times
${(10 L + 1)} \tau_i ,\allowbreak\,
{(10 L + 2)} \tau_i ,\allowbreak\dotsc,\allowbreak\,
{(10 L + \ell)} \tau_i$.

The crucial observation is the following. By the time ${(10 L + \ell + 1)} \tau_i$, the coroutine $i$ has, in essence, played a $(\tau_i, k_i)$-game for $\ell$ rounds with $(i,L,\ell)$-water. The difference is that the coroutine cannot empty the cups immediately after the adversary's move; instead, the cup-emptying takes place during the adversary's next move. Temporarily, some cups may be fuller than in the $(\tau_i, k_i)$-game. However, once we wait for $\tau_i$ time units to let the coroutine $i$ complete its clean-up tour, the maximum level of the water that has arrived before the beginning of the clean-up tour is at most that in the $(\tau_i, k_i)$-game.

The fact that the emptying of the cups is delayed can only hurt the adversary, as during the clean-up tour the player may also accidentally undo some of the cup-filling that the adversary has performed on his turn. The same applies to the intervening lower-numbered coroutines.

Therefore, by Lemma~\ref{lem:tauk-k-game}, we have
\begin{equation}\label{eq_W}
    \BW{10 L \tau_i}{(10 L + \ell) \tau_i}{(10 L + \ell +1) \tau_i}
    \,\le\, H_\ell \cdot \tau_i/k_i
    \,<\, 3 \tau_i/k_i ,
\end{equation}
using the fact that $H_\ell < 3$ for every $\ell\leq 10$.

\subsection{\boldmath Decomposing Arbitrary Time \texorpdfstring{$t$}{t}}

Let $t$ be an arbitrary instant of time. We can write $t$ as $t = T \tau_0 + \epsilon$ for a nonnegative integer $T$ and some remainder $0 \le \epsilon < \tau_0$. Furthermore, we can represent the integer $T$ as
\[
    T \,=\, \ell_0 + 10 \ell_1 + \dotsb + 10^N \ell_N
\]
for some integers $N \in \{0,1,\dotsc\}$ and $\ell_i \in \{1,2,\dotsc,10\}$. Since the range is $1 \le \ell_i \le 10$, not $0 \le \ell_i \le 9$, this is not quite the usual decimal representation; we chose this range for $\ell_i$ to make sure that $\ell_i$ is never equal to~0.

We also need partial sums
\[
    L_i \,=\, \ell_{i+1} + 10 \ell_{i+2} + \dotsb + 10^{N-i-1} \ell_N .
\]
Put otherwise, for each $i \in \{0,1,\dotsc,N\}$ we have
\[
    T \,=\, \ell_0 + 10 \ell_1 + \dotsb + 10^i \ell_i + 10^{i+1} L_i
\]
and therefore
\[
    \begin{split}
        t &\,=\, \epsilon + \ell_0 \tau_0 + \ell_1 \tau_1 + \dotsb + \ell_N \tau_N \\
        &\,=\, \epsilon + \ell_0 \tau_0 + \ell_1 \tau_1 + \dotsb + \ell_i \tau_i + 10 L_i \tau_i
    \end{split}
\]
(recall from Section~\ref{sec:paremeters} that $10^i \tau_0 = 10^i \cdot 10D = \tau_i$).

We partition the time from $0$ to $t-\epsilon$ into long and short periods. The \emph{long period} $i \in \{0,1,\dotsc,N\}$ is of the form
$\Biv{10L_i \tau_i}{{(10L_i + \ell_i-1)} \tau_i}$
and the \emph{short period} $i$ is of the form
$\Biv{{(10L_i + \ell_i-1)} \tau_i}{{(10L_i + \ell_i)} \tau_i}$.
See Figure~\ref{fig:decomp} for an illustration. Short periods are always nonempty, but the long period $i$ is empty if $\ell_i = 1$.

\begin{figure}
    \centering
    \input{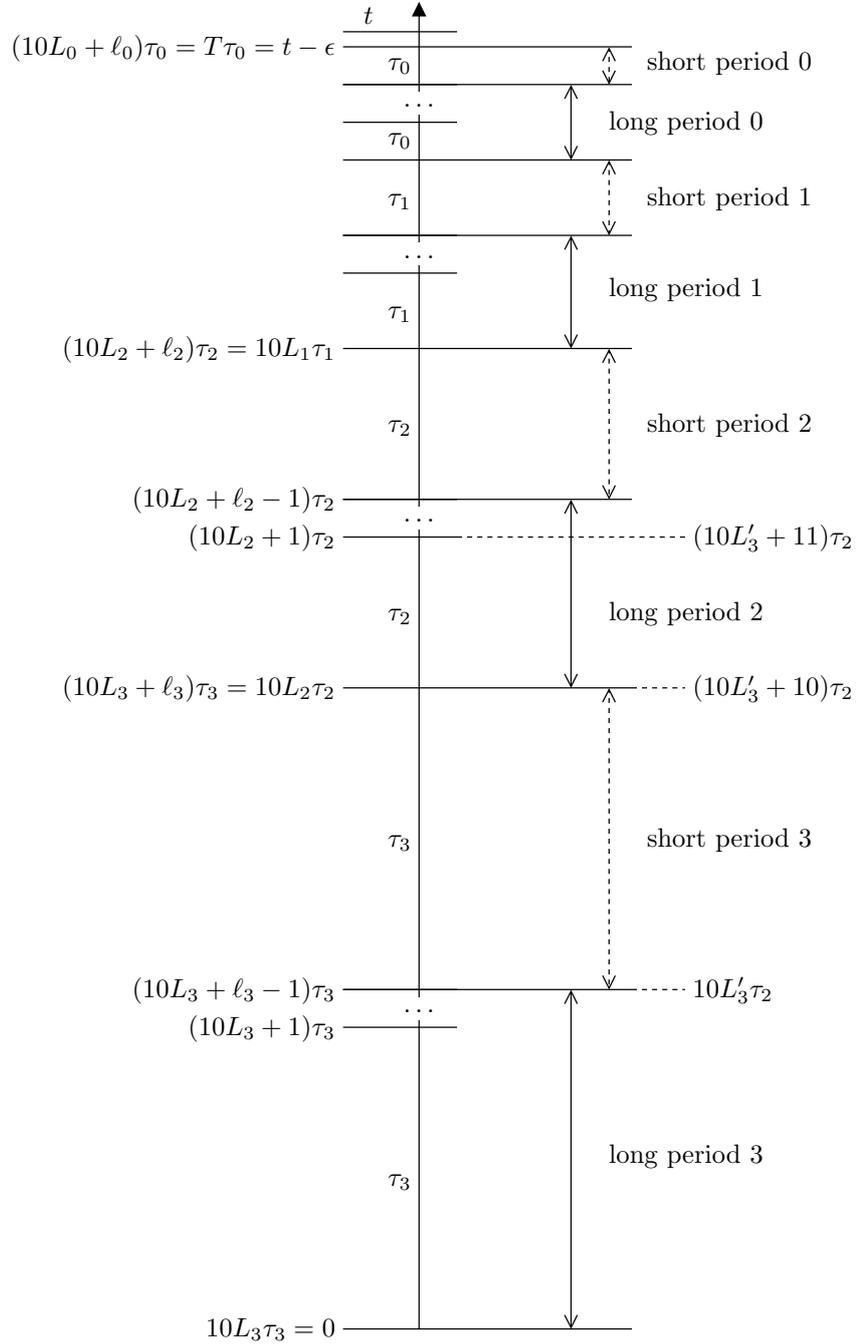}
    \caption{Decomposition of the time; in this example, $N = 3$. The illustration is not in scale; actually $\tau_i = 10 \tau_{i-1}$.}\label{fig:decomp}
\end{figure}

\subsection{\boldmath Any Water at Arbitrary Time \texorpdfstring{$t$}{t}}

Now we make use of the decomposition of an arbitrary time interval $\iv{0}{t}$ defined in the previous section: we have long periods $i \in \{0,1,\dotsc,N\}$, short periods $i \in \{0,1,\dotsc,N\}$, and the remainder $\iv{t-\epsilon}{t}$.

Consider the long period $i$. We bound the backlog from this period by considering the point in time $(10 L_i + \ell_i) \tau_i \le t$. If the period is nonempty, that is, $\ell_i > 1$, then we have by~\eqref{eq_W}
\begin{equation}\label{eq_long}
    \begin{split}
        &\BW{10 L_i \tau_i}{(10 L_i + \ell_i - 1) \tau_i}{t} \\
        &\,\le\, \BW{10 L_i \tau_i}{(10 L_i + \ell_i - 1) \tau_i}{(10 L_i + \ell_i) \tau_i} \,<\, 3 \tau_i/k_i ,
    \end{split}
\end{equation}
using the fact that
\[
    \begin{split}
        &\BW{10 L_i \tau_i}{(10 L_i + \ell_i - 1) \tau_i}{(10 L_i + \ell_i) \tau_i} \\
        &\,\leq\, \BW{10 L_i \tau_i}{(10 L_i + \ell_i) \tau_i}{(10 L_i + \ell_i + 1) \tau_i}.
    \end{split}
\]
Naturally, if the period is empty, then \eqref{eq_long} holds as well.

Consider a short period $i$ for $i > 0$. It will be more convenient to write the short period in the form
$\iv{10 L_i' \tau_{i-1}}{{(10 L_i' + 10)} \tau_{i-1}}$
where $L_i' = 10L_i + \ell_i - 1$ is a nonnegative integer (this is illustrated in Figure~\ref{fig:decomp} for $i=3$). We bound the backlog from this period by considering the point in time ${(10 L_i' + 11)} \tau_{i-1} \le t$. By~\eqref{eq_W} we have
\begin{equation}\label{eq_short}
    \begin{split}
        &\BW{{(10L_i + \ell_i-1)} \tau_i}{{(10L_i + \ell_i)} \tau_i}{t} \\
        &\,=\, \BW{10 L_i' \tau_{i-1}}{(10 L_i' + 10) \tau_{i-1}}{t} \\
        &\,\le\, \BW{10 L_i' \tau_{i-1}}{(10 L_i' + 10) \tau_{i-1}}{(10 L_i + 11) \tau_{i-1}} \,<\, 3 \tau_{i-1}/k_{i-1} .
    \end{split}
\end{equation}

Next consider the short period $i = 0$. We have the trivial bound $\tau_0$ for the water that arrived during the period. Therefore
\begin{equation}\label{eq_short0}
    \begin{split}
        \BW{{(10L_0 + \ell_0-1)} \tau_0}{{(10L_0 + \ell_0)} \tau_0}{t} \,\le\, \tau_0 .
    \end{split}
\end{equation}

Finally, we have the time segment from $t-\epsilon$ to $t$. Again, we have the trivial bound $\epsilon$ for the water that arrived during the time segment. Therefore
\begin{equation}\label{eq_rem}
    \begin{split}
        \BW{t-\epsilon}{t}{t} \,\le\, \epsilon \, < \, \tau_0 .
    \end{split}
\end{equation}

Now we can obtain an upper bound for the backlog at time $t$. Summing up \eqref{eq_long}, \eqref{eq_short}, \eqref{eq_short0}, and \eqref{eq_rem}, we have the maximum backlog
\begingroup
\allowdisplaybreaks
\begin{align*}
    \BW{0}{t}{t}
    &\,\le\, \sum_{i=0}^{N} \BW{10L_i \tau_i}{(10L_i + \ell_i - 1) \tau_i}{t} \\
        &\quad\ +\, \sum_{i=0}^{N} \BW{(10L_i + \ell_i - 1) \tau_i}{(10L_i + \ell_i) \tau_i}{t} \\
        &\quad\ +\, \BW{t-\epsilon}{t}{t} \\
    &\,\le\, \sum_{i=0}^{N} 3 \tau_i/k_i
        \,+\, \sum_{i=1}^{N} 3 \tau_{i-1}/k_{i-1}
        \,+\, 2 \tau_0 \\
    &\,\le\, \sum_{i=0}^{\infty} 6 \tau_i/k_i
        \,+\, 2 \tau_0 \\
    &\,=\, 60 D \sum_{i=0}^{\infty} (2/5)^i
        \,+\, 20 D
    \,=\, 120 D \,\in\, O(D).
\end{align*}
\endgroup

\section{Localized MBP: Hardness}\label{sec:local-hard}

Recall from Section~\ref{sec:localizedMBP} that in the localized MBP the player wins if she catches the adversary (or otherwise keeps the adversary from reaching the target value) and the adversary wins if he reaches the target value somewhere.
In this section, we prove the following theorem:
\begin{theorem}\label{pspacehard}
    The localized MBP is PSPACE-hard, even for a target value of~$2$.
\end{theorem}

\begin{proof}
We present a reduction from Quantified 3SAT (Q3SAT), where the Boolean formula $F$, containing $m$ clauses $c_1, c_2, \dotsc, c_m$ and $n$ variables $x_1, x_2, \dotsc,\allowbreak x_n$, is in conjunctive normal form with 3 literals per clause; without loss of generality, we assume that $n$ is even. A Q3SAT instance $I_F$ asks for the truthfulness of the expression $\forall x_1\exists x_2\forall x_3\ldots \exists x_n: c_1 \wedge c_2\wedge \ldots\wedge c_m$. It is helpful to think of this as a game between the player and the adversary who take turns at setting the variables in ascending order of indices; the player tries to set the odd variables in a way that will keep $F$ from being true, while the adversary sets the even variables in a way that aims at $F$ ending up satisfied.

Now we construct an instance of the localized MBP by specifying the digraph $D=(V,A)$ on which it is played; for more details of a somewhat related construction, see Fekete et al.~\cite{fffs-tspc-04}. The initial vertices for player and adversary are $u_{-1}$ and $u_0$, respectively, and the player starts the game. We use the vertex set
$V = \{x_i,\ov{x}_i,u_i : 1\le i\le n\}\cup\{u_{-1},u_0\}
\cup\{c_j,\overline{c}_j,d_j : 1\le j\le m\}$
and the edge set
\begin{align*}
    A &=\{(x_i,u_i),(\ov{x}_i,u_i) : 1\le i\le n\}
    \cup\{(u_i,x_{i+2}), (u_i,\ov{x}_{i+2}) : -1\le i\le n-2\}\\
    &\cup\{(u_{n},c_j), (u_{n-1},\overline{c}_j), (\overline{c}_j,d_j): 1\le j\le m\}
    \cup\{(\overline{c}_j,c_k) : 1\le j\le m, k\neq j\}\\
    &\cup\{(c_j,x_i), (d_j,x_i) : {\rm iff\ } c_j\ {\rm contains\ } x_i,\
    1\le j\le m,\ 1\le i\le n\}\\
    &\cup\{(c_j,\ov{x}_i), (d_j,\ov{x}_i) : {\rm iff\ } c_j\ {\rm contains\ } \ov{x}_i,\
    1\le j\le m,\ 1\le i\le n\}.
\end{align*}
We single out the subset $V_C=\{x_{2i-1},\ov{x}_{2i-1} : 1\le i\le n/2\} \subseteq V$ of vertices that start with an initial load of one; all other vertices start with an initial load of zero. Note that $|V|=O(m+n)$, $\ |V_C|=O(n)$, $|A|=O(m^2+n)$, so the construction is clearly polynomial. The construction is illustrated in Figures \ref{fig:variable} and~\ref{fig:clause}.

\begin{figure}
   \centering\input{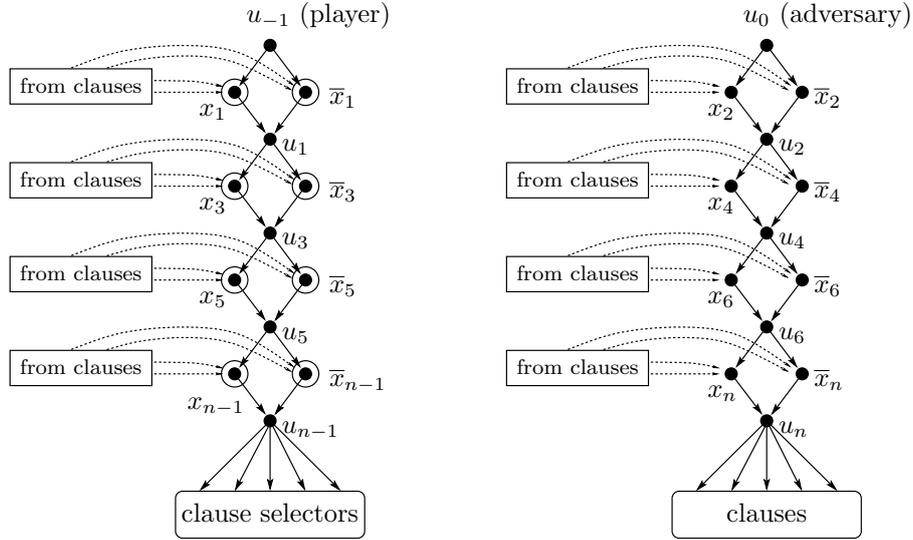}
   \caption{The variable gadget: The player chooses a truth setting for the odd variables by running from $u_{-1}$ to $u_{n-1}$, while the adversary chooses a truth setting for the even variables by running from $u_{0}$ to $u_n$. Note that initially, the odd-numbered vertices carry a load of 1 (indicated by circles), while all other vertices start out with a load of~0.}
   \label{fig:variable}
\end{figure}

\begin{figure}
   \centering\input{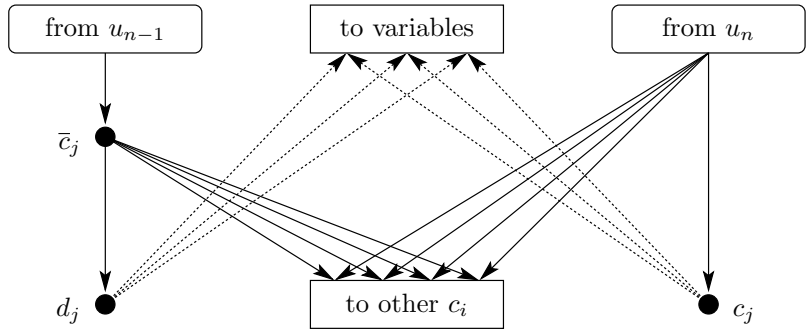}
   \caption{A clause gadget: The player picks a clause by moving to a clause selector vertex $\overline{c}_j$, which is connected to all clause nodes $c_k$ for $k\neq j$. This forces the adversary to move to $c_j$ in order to avoid being caught prematurely. Then the player moves to $d_j$, catching the adversary after he moves to one of the three variable vertices corresponding to the clause $c_j$. The adversary wins if and only if that vertex already carries a load of 1, i.e., if the corresponding variable satisfies the clause.}
   \label{fig:clause}
\end{figure}

Now consider the game on $D$. For easier reference, we denote by \emph{diamonds} the subgraphs induced by $(u_{i-1},x_i,\ov{x}_i,u_i)$. The player and the adversary traverse the diamonds according to their chosen truth assignments in the given instance of Q3SAT, i.e., the adversary traverses $x_i$ if $x_i=1$ and otherwise traverses $\ov{x}_i$; analogously, the player traverses $\ov{x}_i$ if $x_i=1$ and otherwise traverses $x_i$; obviously, both participants are forced to move this way, implying a corresponding truth assignment. After the player (resp., adversary) arrives at $u_{n-1}$ (resp., $u_n$), for each $i$ exactly one of the vertices $x_i$ or $\ov{x}_i$ has a load of one.  We argue in the following that the adversary wins if and only if all clauses are satisfied.

After arriving at vertex $u_{n-1}$, the player selects a clause $c_j$ by moving to the corresponding clause selector vertex $\ov{c}_j$ (recall from Section~\ref{sec:localizedMBP} that the player starts first, and hence arrives to $u_{n-1}$ before the adversary arrives at $u_n$). This forces the adversary to move by $(u_n,c_j)$ in order to avoid being caught (note that there is no edge $(\overline{c}_j,c_j)$). As the player has no way of catching the adversary in her next move, the adversary wins if the clause vertex $c_j$ is adjacent to a variable vertex with load one, i.e., the corresponding variable setting satisfies the clause. On the other hand, the player can prevent the adversary from reaching a load of two if the clause is unsatisfied, by moving to vertex $d_j$, assuring herself of catching the adversary in her next move.

This shows that the player wins if and only if there is an unsatisfied clause.
\end{proof}

\section{Conclusions}

We have studied three versions of the MBP. We have shown that the localized MBP is hard to play optimally, while the geometric MBP is easy to play near-optimally (up to constant factors in the backlog). The hardness of the discrete MBP remains an open question.

For the geometric MBP, we have shown that the player has a strategy where the backlog does not depend on the number of cups, but only on the diameter of the cups set. This implies that the backlog scales linearly with the diameter of the area, and this is tight. An interesting open question is the scalability in the \emph{number of players}. If we have four players instead of one, we can divide the area into four parts and assign each player to one of the parts; this effectively halves the diameter and thus halves the backlog. It remains to be investigated whether we can exploit multiple players in a more efficient manner.

\section*{Acknowledgements}

We are very grateful to an anonymous referee who provided numerous suggestions that improved the presentation.
This article combines results from two preliminary conference articles~\cite{bfk-mbp-07,ps-obp-08}. We gratefully acknowledge Gerhard Woeginger for discussions that lead to the formulation of this problem. We thank Estie Arkin, Leonidas Guibas, Patrik Flor\'{e}en and Petteri Kaski for many helpful discussions.

MAB was supported in part by NSF Grants CCF-0621439/0621425, CCF-0540897/\allowbreak 05414009, CCF-0634793/0632838, CNS-0627645, CCF 1114809, CCF 1217708, IIS 1247726, IIS 1251137, CNS 1408695, and CCF 1439084.
AK was supported by DFG Grants FE407/9-1 and FE407/9-2.
VL was supported in part by NSF Grants CCR-0329910, Department of Commerce TOP 39-60-04003, Department of Energy DE-FC26-06NT42853, and the Wright Center for Sensor Systems Engineering.
JSBM was supported in part by the U.S.-Israel Binational Science Foundation (2000160, 2010074), NASA (NAG2-1620), NSF (CCF-0528209, ACI-0328930, CCF-0431030, CCF-1018388, CCF-1540890), and Metron Aviation.
JS was supported in part by the Academy of Finland, Grants 116547 and 118653 (ALGODAN), and by Helsinki Graduate School in Computer Science and Engineering (Hecse).

\bibliographystyle{abbrv}
\bibliography{cups}

\end{document}